\newtheorem{theorem}{Theorem}
\begin{document}

\title{Performance Analysis of NOMA-based Cooperative Relaying in $\alpha$-$\mu$ Fading Channels}

\author{\IEEEauthorblockN{Vaibhav Kumar,
Barry Cardiff, and Mark F. Flanagan} \\
\IEEEauthorblockA{School of Electrical and Electronic Engineering \\ 
University College Dublin, Belfield, Dublin 4, Ireland\\
Email: vaibhav.kumar@ucdconnect.ie, barry.cardiff@ucd.ie, mark.flanagan@ieee.org}}

\maketitle
\let\bs\boldsymbol

\begin{abstract}
Non-orthogonal multiple access (NOMA) is widely recognized as a potential multiple access technology for efficient radio spectrum utilization in the fifth-generation (5G) wireless communications standard. In this paper, we study the average achievable rate and outage probability of a cooperative relaying system (CRS) based on NOMA (CRS-NOMA) over wireless links governed by the $\alpha$-$\mu$ generalized fading model; here $\alpha$ and $\mu$ designate the \emph{nonlinearity} and \emph{clustering} parameters, respectively, of each link. The average achievable rate is represented in closed-form using Meijer's G-function and the extended generalized bivariate Fox's H-function (EGBFHF), and the outage probability is represented using the lower incomplete Gamma function. Our results confirm that the CRS-NOMA outperforms the CRS with conventional orthogonal multiple access (CRS-OMA) in terms of spectral efficiency at high transmit signal-to-noise ratio (SNR). It is also evident from our results that with an increase in the value of the nonlinearity/clustering parameter, the SNR at which the CRS-NOMA outperforms its OMA based counterpart becomes higher. Furthermore, the asymptotic analysis of the outage probability reveals the dependency of the diversity order of each symbol in the CRS-NOMA system on the $\alpha$ and $\mu$ parameters of the fading links.
\end{abstract}


\IEEEpeerreviewmaketitle

\section{Introduction}
NOMA has gained widespread popularity as a potential candidate for multiple access technology in the 5G wireless standard and beyond, which is capable of satisfying the need for low-latency and high-reliability communications for heterogeneous data traffic~\cite{Bhargava}. Many variants of NOMA have been suggested in the literature, e.g., power-domain NOMA, sparse code multiple access~(SCMA), interleave division multiple access~(IDMA), low-density spreading (LDS) and pattern division multiple access~(PDMA)~\cite{NOMA_book}. Among all of these variants, power-domain NOMA (which we will refer to simply as `NOMA' throughout this paper) has gained widespread popularity because of its implementation-friendly transceiver architecture. It can accommodate several users within the same orthogonal resource block (time, frequency and/or spreading code) via multiplexing them in the power domain at the transmitter and using successive interference cancellation (SIC) at the receiver to remove the messages intended for other users. In the case of NOMA, users with poor channel conditions have a larger share of the transmission power, unlike the conventional OMA where more power is allocated to users with stronger channels (also known as the water-filling strategy)~\cite{NOMA_App}.

A very promising application of NOMA for power-domain multiplexed transmission using a cooperative relaying system (CRS-NOMA) was proposed in~\cite{CRS_NOMA}, where the source was able to deliver two different symbols to the destination in two time slots with the help of a relay. It was shown in~\cite{CRS_NOMA} that CRS-NOMA outperforms the conventional OMA based decode-and-forward cooperative relaying system in terms of average achievable rate in the case of Rayleigh fading channels. The performance superiority of CRS-NOMA in Rician fading was shown in~\cite{CRS_NOMA_Rician}. However, while Rayleigh and Rician fading models are analytically convenient wireless channel models, situations are easily encountered in practice for which these distributions do not adequately fit experimental data~\cite{MDY}.

In this paper, we present the performance analysis of CRS-NOMA in the presence of $\alpha$-$\mu$ fading, which is a generalized, flexible and mathematically tractable distribution to model the effect of small-scale channel fading and includes Gamma, Erlang, Chi-squared, Rayleigh, Nakagami-$m$, exponential, Weibull and one-sided Gaussian distributions as special cases~\cite{MDY}. It can also (approximately) represent some large-scale fading (such as log-normal fading), as described in~\cite{LogNormal}. Moreover, the very high degree of flexibility afforded by the $\alpha$-$\mu$ distribution renders it better suited to adjust to fit field data.

The main contributions of this paper include the derivation of  closed-form expressions for the average achievable rate and outage probability of the CRS-NOMA in the presence of $\alpha$-$\mu$ fading on each link. In order to have a better insight into the asymptotic system performance, we also present a diversity analysis of the CRS-NOMA, and in particular we show how the diversity order depends on the $\alpha$ and $\mu$ parameters of each of the fading links.
\section{System Model}
Consider the cooperative relaying system shown in Fig.~\ref{SysMod}, which consists of a source $S$, a relay $R$ and a destination $D$. All nodes are equipped with a single antenna and are assumed to be operating in half-duplex mode. All wireless links are assumed to be $\alpha$-$\mu$ distributed with the non-linearity parameters between $S$-$R$, $S$-$D$ and $R$-$D$ links denoted by $\alpha_{sr}$, $\alpha_{sd}$ and $\alpha_{rd}$, respectively, the corresponding clustering parameters denoted by $\mu_{sr}$, $\mu_{sd}$ and $\mu_{rd}$, respectively, and the $\alpha$-root-mean values of the channel fading coefficients denoted by $\Omega_{sr}, \Omega_{sd}$ and $\Omega_{rd}$, respectively. Furthermore, it is assumed that the $S$-$D$ link is on average weaker that the $S$-$R$ link, i.e., $\Omega_{sd}^{\alpha_{sd}} < \Omega_{sr}^{\alpha_{sr}}$.

In the CRS-NOMA, the source broadcasts $\sqrt{a_1 P}s_1 + \sqrt{a_2 P}s_2$ to both relay and destination in the first time slot, where $s_1$ and $s_2$ are data-bearing constellation symbols, multiplexed in the power domain ($\mathbb E[|x_i|^2] = 1$ for $i = 1, 2$, where $\mathbb E[\cdot]$ denotes the expectation operator), $P$ is the power transmitted from the source, and $a_1$ and $a_2$ are power allocation coefficients (with $a_1 + a_2 = 1$ and $a_1 > a_2$). After reception, the destination decodes symbol $s_1$ by treating the interference from symbol $s_2$ as additional noise, while the relay first decodes symbol $s_1$ (treating the interference from symbol $s_2$ as additional noise) and then applies SIC to decode symbol $s_2$. In the second time slot, the relay transmits its estimate of symbol $s_2$, denoted by $\hat s_2$, to the destination with power $P$. In this manner, two different symbols are delivered to the destination in two time slots.

In contrast to this, in the CRS-OMA, the source broadcasts symbol $s_1$ with power $P$ to both relay and destination in the first time slot and the relay (after decoding) forwards its estimate of symbol $s_1$, denoted by $\hat s_1$, to the destination in the second time slot. In this fashion, only a single symbol is delivered to the destination in two time slots.
\begin{figure}[t]
\centering
\includegraphics[scale = 0.8]{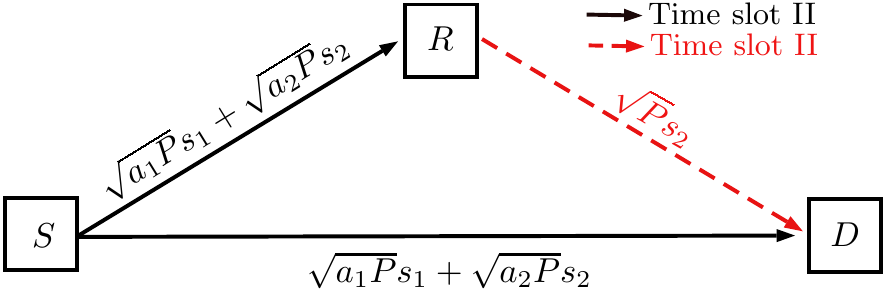}
\caption{System model for NOMA-based cooperative relaying.}
\label{SysMod}
\end{figure}

Denoting the channel fading coefficient between the $S$-$R$, $S$-$D$ and $R$-$D$ links by $h_{sr}$, $h_{sd}$ and $h_{rd}$, respectively, the probability density function (PDF) of $|h_{\tau}|, \tau \in \{sr, sd, rd\}$ can be expressed as~\cite[Eqn.~(1)]{MDY}
\begin{align}
	f_{|h_\tau|}(x) = \dfrac{\alpha_\tau \mu_\tau^{\mu_\tau}x^{\alpha_\tau \mu_\tau - 1}}{\Omega_\tau^{\alpha_\tau \mu_\tau} \Gamma(\mu_\tau)} \exp \left( -\mu_\tau \dfrac{x^{\alpha_\tau}}{\Omega_{\tau}^{\alpha_\tau}}\right), \notag 
\end{align}
where $\Gamma(\cdot)$ is the Gamma function. The cumulative distribution function (CDF) of the channel gain $\lambda_\tau \triangleq |h_\tau|^2$ can therefore be given by
\begin{align}
	F_{\lambda_\tau}(x) = \int_0^{\sqrt x} f_{|h_\tau|}(t)dt = \dfrac{1}{\Gamma(\mu_\tau)}\gamma\left( \mu_\tau, \dfrac{\mu_\tau x^{0.5 \alpha_\tau}}{\Omega_\tau^{\alpha_\tau}}\right), \!\! \label{F_lambda_tau}
\end{align}
where $\gamma (\cdot, \cdot)$ is the lower incomplete Gamma function and the integration above is solved using~\cite[Eqn.~($3.381$-$8^*$),~p.~346]{Grad}. The PDF of $\lambda_{\tau}$ can be obtained by differentiating the equation above w.r.t. $x$, yielding
\begin{align}
	f_{\lambda_\tau}(x) = \dfrac{\alpha_\tau \mu_\tau^{\mu_\tau} x^{0.5 \alpha_\tau \mu_\tau - 1}}{2 \Omega_\tau^{\alpha_\tau \mu_\tau} \Gamma(\mu_\tau)}\exp \left( -\mu_\tau \dfrac{x^{0.5\alpha_\tau}}{\Omega_\tau^{\alpha_\tau}}\right). \label{f_lambda_tau}
\end{align}

\section{Performance Analysis}
In this section we present a comprehensive analysis of the average achievable rate, outage probability and diversity order for the CRS-NOMA in $\alpha$-$\mu$ fading channels.
\subsection{Average achievable rate}
The signals received at the relay and the destination in the first time slot are given by
\begin{align}
	y_{sr} = & h_{sr}\left( \sqrt{a_1 P}s_1 + \sqrt{a_2 P} s_2\right) + n_{sr}, \notag \\
	y_{sd} = & h_{sd} \left( \sqrt{a_1 P}s_1 + \sqrt{a_2 P} s_2\right) + n_{sd}, \notag 
\end{align}
where $n_{sr}$ and $n_{sd}$ denote complex additive white Gaussian noise (AWGN) with zero mean and variance $\sigma^2$ at the relay and the destination, respectively. The received instantaneous signal-to-interference-plus-noise ratio (SINR) at the relay for decoding symbol $s_1$ and the received instantaneous SNR for decoding symbol $s_2$ (assuming symbol $s_1$ is decoded correctly) is given by
\begin{align}
	\gamma_{sr}^{(1)} = \dfrac{a_1 \rho \lambda_{sr}}{a_2 \rho \lambda_{sr} + 1}, \quad \gamma_{sr}^{(2)} = a_2 \rho \lambda_{sr}, \notag
\end{align}
where $\rho = P/\sigma^2$ is the transmit SNR. Similarly, the received instantaneous SINR for decoding symbol $s_1$ at the destination is given by
\begin{align}
	\gamma_{sd} = \dfrac{a_1 \rho \lambda_{sd}}{a_2 \rho \lambda_{sd} + 1}. \notag
\end{align}
In the second time slot, the relay forwards it estimate of $s_2$, denoted by $\hat s_2$, to the destination with power $P$. The signal received at the destination is given by
\begin{align}
	y_{rd} = \sqrt{P} \hat s_2 + n_{rd}, \notag
\end{align}
where $n_{rd}$ denotes the complex AWGN at the destination with zero mean and variance $\sigma^2$. The received instantaneous SNR at the destination for decoding symbol $s_2$ is given by
\begin{align}
	\gamma_{rd} = \rho \lambda_{rd}. \notag
\end{align}
The instantaneous achievable rate for symbol $s_1$ is therefore given by 
\begin{align}
	C_{s_1} = & \min \left\{0.5 \log_2\left(1 + \gamma_{sr}^{(1)}\right), 0.5 \log_2(1 + \gamma_{sd})\right\} \notag \\
			= & \, 0.5 \log_2\left( 1 + \dfrac{a_1 \rho X}{a_2 \rho X + 1}\right) = 0.5 \log_2(1 + \rho X) - 0.5 \log_2(1 + a_2 \rho X), \notag
\end{align}
where $X \triangleq \min\left\{\lambda_{sr}, \lambda_{sd}\right\}$. Similarly, the instantaneous achievable rate for symbol $s_2$ is given by
\begin{align}
	C_{s_2} = & \min \left\{ 0.5 \log_2 \left( 1 + \gamma_{sr}^{(2)}\right), 0.5 \log_2(1 + \gamma_{rd})\right\} = 0.5 \log_2(1 + \rho Y), \notag 
\end{align}
where $Y \triangleq \min\{a_2 \lambda_{sr}, \lambda_{rd}\}$.

The average achievable rate for symbol $s_1$ is therefore given by
\begin{align}
	& \bar C_{s_1} = \mathbb E_{X}\left[ C_{s_1}\right] = \dfrac{0.5}{\ln(2)} \! \left[\int_{0}^{\infty} \!\! \ln(1 + \rho x)f_X(x)dx \! - \! \!\!\int_{0}^{\infty}\! \! \ln(1 \!+ \!a_2 \rho x)f_X(x)dx \right] = \dfrac{0.5}{\ln(2)} \left[ \mathbb I_1 - \mathbb I_2\right], \label{C_s1_int}
\end{align}
where $\mathbb E_{\mathcal Z}[\cdot]$ denotes expectation w.r.t. the random variable $\mathcal Z$ and $f_{\mathcal Z}(\cdot)$ denotes the PDF of the random variable $\mathcal Z$.
\begin{theorem}
	A closed-form expression for the average achievable rate of symbol $s_1$ in CRS-NOMA is given by 
	\begin{align}
		\bar C_{s_1} \! = \! \dfrac{0.5}{\ln(2)}[\underbrace{(I_1 - I_2 + I_3 - I_4)}_{\mathbb I_1} - \underbrace{(I_5 - I_6 + I_7 - I_8)}_{\mathbb I_2}], \label{C_s1_theorem}
	\end{align}
where,
\begin{align}
	I_1 = \dfrac{\mu_{sr}^{\mu_{sr}}}{\sqrt{2} (2\pi)^{\alpha_{sr} - 0.5} \Omega_{sr}^{\alpha_{sr}\mu_{sr}} \Gamma(\mu_{sr}) \rho^{0.5 \alpha_{sr}\mu_{sr}}} G_{2\alpha_{sr}, 2 + 2 \alpha_{sr}}^{2 + 2 \alpha_{sr}, \alpha_{sr}} \!\!\left(\frac{\mu_{sr}^2}{4 \Omega_{sr}^{2\alpha_{sr}} \rho^{\alpha_{sr}}} \left\vert \begin{smallmatrix} \chi_{sr},  \,\,\Delta\left(\alpha_{sr}, 1- 0.5 \alpha_{sr}\mu_{sr}\right) \\[0.6em] \Delta(2, 0), \quad \chi_{sr}, \quad  \chi_{sr}\end{smallmatrix}\right.\!\right), \label{I1}
\end{align}
\begin{align}
	I_2 = \dfrac{H_{1, 0:2, 2:1, 2}^{1, 0:1, 2:1, 1} \left[ \left. \begin{smallmatrix} \left(1-\mu_{sr}:\frac{2}{\alpha_{sr}}, \frac{\alpha_{sd}}{\alpha_{sr}}\right) \\[0.6em] \text{--} \end{smallmatrix}\!\right\vert \left. \begin{smallmatrix} (1, 1), & (1, 1) \\[0.6em] (1, 1), & (0, 1)\end{smallmatrix}\right\vert  \left. \begin{smallmatrix} (1, 1) &\\[0.6em] (\mu_{sd}, 1),& (0, 1)\end{smallmatrix} \right\vert \rho \left( \frac{\Omega_{sr}}{\mu_{sr}^{1/\alpha_{sr}}}\right)^{\!\!2}, \frac{\mu_{sd} \Omega_{sr}^{\alpha_{sd}}}{\mu_{sr}^{\alpha_{sd}/\alpha_{sr}} \Omega_{sd}^{\alpha_{sd}}} \right]}{\Gamma(\mu_{sr}) \Gamma(\mu_{sd})} , \label{I2}
\end{align}
\begin{align}
	I_3 = \dfrac{\mu_{sd}^{\mu_{sd}}}{\sqrt{2} (2\pi)^{\alpha_{sd} - 0.5} \Omega_{sd}^{\alpha_{sd}\mu_{sd}} \Gamma(\mu_{sd}) \rho^{0.5 \alpha_{sd}\mu_{sd}}} G_{2\alpha_{sd}, 2 + 2 \alpha_{sd}}^{2 + 2 \alpha_{sd}, \alpha_{sd}} \left(\frac{\mu_{sd}^2}{4 \Omega_{sd}^{2\alpha_{sd}} \rho^{\alpha_{sd}}} \left\vert \begin{smallmatrix} \chi_{sd},  \,\,\Delta\left(\alpha_{sd}, 1- 0.5 \alpha_{sd}\mu_{sd}\right) \\[0.6em] \Delta(2, 0), \quad \chi_{sd}, \quad  \chi_{sd}\end{smallmatrix}\right. \right) , \label{I3}
\end{align}
\begin{align}
	I_4 = \dfrac{H_{1, 0:2, 2:1, 2}^{1, 0:1, 2:1, 1} \left[ \left. \begin{smallmatrix} \left(1-\mu_{sd}:\frac{2}{\alpha_{sd}}, \frac{\alpha_{sr}}{\alpha_{sd}}\right) \\[0.6em] \text{--} \end{smallmatrix}\!\right\vert \left. \begin{smallmatrix} (1, 1), & (1, 1) \\[0.6em] (1, 1), & (0, 1)\end{smallmatrix}\right\vert  \left. \begin{smallmatrix} (1, 1) &\\[0.6em] (\mu_{sr}, 1),& (0, 1)\end{smallmatrix}\right\vert \rho \left( \frac{\Omega_{sd}}{\mu_{sd}^{1/\alpha_{sd}}} \right)^{2}, \frac{\mu_{sr} \Omega_{sd}^{\alpha_{sr}}}{\mu_{sd}^{\alpha_{sr}/\alpha_{sd}} \Omega_{sr}^{\alpha_{sr}}} \right]}{\Gamma(\mu_{sd}) \Gamma(\mu_{sr})}, \label{I4}
\end{align}
$G_{p, q}^{m, n}(\cdot)$ is Meijer's G-function, $\chi_{\tau} \triangleq \Delta(\alpha_{\tau}, -0.5 \alpha_{\tau}\mu_{\tau})$ where $\Delta(x, y) \triangleq \tfrac{y}{x}, \tfrac{y + 1}{x}, \ldots, \tfrac{y + x - 1}{x}$, and $H_{p_1, q_1:p_2, q_2:p_3, q_3}^{m_1, n_1:m_2, n_2:m_3, n_3}[\cdot]$ is the extended generalized bivariate Fox's H-function (EGBFHF)\footnote{A \textsc{Matlab} implementation of the EGBFHF is given in~\cite{MatlabImplement} and a \textsc{Mathematica} implementation is given in~\cite{Secrecy}.} as defined in~\cite[Eqn.~(2.57)]{HFunctionBook}. Expressions for $I_5, I_6, I_7$ and $I_8$ can be obtained by replacing $\rho$ by $a_2 \rho$ in~\eqref{I1}-\eqref{I4}, respectively.
\end{theorem}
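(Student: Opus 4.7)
The plan is to begin from the integral representation of $\mathbb{I}_1$ in \eqref{C_s1_int} and derive the PDF of $X \triangleq \min\{\lambda_{sr}, \lambda_{sd}\}$. By independence of the two fading links, $f_X(x) = f_{\lambda_{sr}}(x)[1 - F_{\lambda_{sd}}(x)] + f_{\lambda_{sd}}(x)[1 - F_{\lambda_{sr}}(x)]$, so that $\mathbb{I}_1$ decomposes naturally into four single-variable integrals: two \emph{simple} terms (involving only the PDF of one link) and two \emph{coupled} terms (involving the PDF of one link multiplied by the CDF of the other). These four pieces are to be identified with $I_1$, $I_2$, $I_3$, $I_4$ respectively, and the identical decomposition of $\mathbb{I}_2$ produces $I_5$--$I_8$ after the substitution $\rho \mapsto a_2\rho$.

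For the simple terms $I_1$ and $I_3$, I would express $\ln(1+\rho x)$ and the $\alpha$-$\mu$ exponential $\exp(-\mu_\tau x^{0.5\alpha_\tau}/\Omega_\tau^{\alpha_\tau})$ as Meijer's G-functions, and then eliminate the fractional-power argument $x^{0.5\alpha_\tau}$ by invoking the Gauss--Legendre multiplication formula. The resulting single Mellin--Barnes contour integral in $x$ evaluates in closed form to a Meijer's G-function after matching the parameter lists; this is what produces the index sets $\chi_\tau = \Delta(\alpha_\tau, -0.5\alpha_\tau \mu_\tau)$ and $\Delta(\alpha_\tau, 1 - 0.5\alpha_\tau \mu_\tau)$ appearing in \eqref{I1} and \eqref{I3}.

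For the coupled terms $I_2$ and $I_4$, the lower incomplete Gamma factor from \eqref{F_lambda_tau} must additionally be written as a Fox H-function (equivalently, via its own Mellin--Barnes integral). The integrand for $I_2$ then becomes a triple contour integral in which the inner $x$-integral reduces to a beta-type evaluation through the identity $\int_0^{\infty} x^{s-1}(1+\rho x)^{-t}\,dx = \rho^{-s} B(s, t-s)$ after again applying the Gauss multiplication formula to absorb the $x^{0.5\alpha_{sr}}$ and $x^{0.5\alpha_{sd}}$ powers. The two remaining contour integrals, whose kernels are coupled only through the shared $s$-variable originating from $\ln(1+\rho x)$, can then be recognised as the EGBFHF of~\cite[Eqn.~(2.57)]{HFunctionBook}, with the two arguments taking the forms $\rho(\Omega_{sr}/\mu_{sr}^{1/\alpha_{sr}})^2$ and $\mu_{sd}\Omega_{sr}^{\alpha_{sd}}/(\mu_{sr}^{\alpha_{sd}/\alpha_{sr}}\Omega_{sd}^{\alpha_{sd}})$ shown in \eqref{I2}. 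The expression for $I_4$ follows by the obvious symmetry $sr \leftrightarrow sd$.

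The main obstacle will be the careful bookkeeping of Mellin--Barnes parameters: combining the G/H representations of $\ln(1+\rho x)$, the incomplete Gamma CDF, and the stretched-exponential PDF introduces several factors of the form $x^{c_1 s_1 + c_2 s_2 + \cdots}$, and the Gauss multiplication step must be applied consistently so that the resulting pole/zero structure matches exactly the $(m_i, n_i, p_i, q_i)$ indices of the standard EGBFHF. Verifying the convergence conditions of the double contour integral (absolute convergence strips for each $s_i$) and confirming that the prefactor $1/[\Gamma(\mu_{sr})\Gamma(\mu_{sd})]$ emerges correctly after the beta-function cancellations are the two places where a bookkeeping error is most likely.
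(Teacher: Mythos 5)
Your proposal follows essentially the same route as the paper's Appendix~A: the same expansion of the min-PDF into four integrals $I_1$--$I_4$, Meijer's G-function representations of the logarithm, exponential and incomplete Gamma factors with the Gauss multiplication formula absorbing the fractional powers $x^{0.5\alpha_\tau}$ for $I_1$ and $I_3$, and the product-of-three-H-functions integral collapsing to a two-fold Mellin--Barnes integral recognised as the EGBFHF for $I_2$ and $I_4$, with $I_5$--$I_8$ obtained by the substitution $\rho \mapsto a_2\rho$. The only cosmetic difference is that the paper invokes the ready-made master formula of Mittal--Gupta (after the change of variable $\tilde x = x^{0.5\alpha_{sr}}$, so that the inner integral is taken against the exponential factor and yields the Gamma term $(1-\mu_{sr}:2/\alpha_{sr},\alpha_{sd}/\alpha_{sr})$), whereas you propose to carry out the equivalent Mellin--Barnes bookkeeping by hand.
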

\begin{proof}
See Appendix~A.
\end{proof}

Similarly, the average achievable rate for symbol $s_2$ is given by 
\begin{align}
	\bar C_{s_2} = \mathbb E_Y[C_{s_2}] = \dfrac{0.5}{\ln(2)} \int_{0}^{\infty} \ln(1 + \rho y)f_Y(y)dy. \label{C_s2_int}
\end{align}
\begin{theorem}
	A closed-form expression for the average achievable rate of symbol $s_2$ in CRS-NOMA is given by 
	\begin{align}
		\bar C_{s_2} = \dfrac{0.5}{\ln(2)}[I_9 - I_{10} + I_{11} - I_{12}], \label{C_s2_theorem}
	\end{align}
	where
\begin{align}
	I_9 = \dfrac{\mu_{sr}^{\mu_{sr}} G_{2\alpha_{sr}, 2 + 2 \alpha_{sr}}^{2 + 2 \alpha_{sr}, \alpha_{sr}} \!\!\left(\frac{\mu_{sr}^2}{4 a_2^{\alpha_{sr}} \Omega_{sr}^{2\alpha_{sr}} \rho^{\alpha_{sr}}} \left\vert \begin{smallmatrix} \chi_{sr},  \,\,\Delta\left(\alpha_{sr}, 1- 0.5 \alpha_{sr}\mu_{sr}\right) \\[0.6em] \Delta(2, 0), \quad \chi_{sr}, \quad  \chi_{sr}\end{smallmatrix}\right.\!\right)}{\sqrt{2} (2\pi)^{\alpha_{sr} - 0.5} a_2^{0.5\alpha_{sr}\mu_{sr}}\Omega_{sr}^{\alpha_{sr}\mu_{sr}} \Gamma(\mu_{sr}) \rho^{0.5 \alpha_{sr}\mu_{sr}}}, \label{I9}
\end{align}
\begin{align}
	I_{10} = \dfrac{H_{1, 0:2, 2:1, 2}^{1, 0:1, 2:1, 1} \left[ \left. \begin{smallmatrix} \left(1-\mu_{sr}:\frac{2}{\alpha_{sr}}, \frac{\alpha_{rd}}{\alpha_{sr}}\right) \\[0.6em] \text{--} \end{smallmatrix}\right\vert  \left. \begin{smallmatrix} (1, 1), & (1, 1) \\[0.6em] (1, 1), & (0, 1)\end{smallmatrix}\right\vert \left. \begin{smallmatrix} (1, 1) &\\[0.6em] (\mu_{rd}, 1),& (0, 1)\end{smallmatrix} \right\vert \rho \left( \frac{\Omega_{sr}}{\mu_{sr}^{1/\alpha_{sr}}}\!\!\right)^{2}, \dfrac{\mu_{rd} a_2^{0.5 \alpha_{rd}}\Omega_{sr}^{\alpha_{rd}}}{\mu_{sr}^{\alpha_{rd}/\alpha_{sr}} \Omega_{rd}^{\alpha_{rd}}} \right]}{\Gamma(\mu_{sr}) \Gamma(\mu_{rd})}, \label{I10}
\end{align}
\begin{align}
	I_{11} = \dfrac{\mu_{rd}^{\mu_{rd}} G_{2\alpha_{rd}, 2 + 2 \alpha_{rd}}^{2 + 2 \alpha_{rd}, \alpha_{rd}} \!\!\left(\frac{\mu_{rd}^2}{4 \Omega_{rd}^{2\alpha_{rd}} \rho^{\alpha_{rd}}} \left\vert \begin{smallmatrix} \chi_{rd},  \,\,\Delta\left(\alpha_{rd}, 1- 0.5 \alpha_{rd}\mu_{rd}\right) \\[0.6em] \Delta(2, 0), \quad \chi_{rd}, \quad  \chi_{rd}\end{smallmatrix}\right. \right)}{\sqrt{2} (2\pi)^{\alpha_{rd} - 0.5} \Omega_{rd}^{\alpha_{rd}\mu_{rd}} \Gamma(\mu_{rd}) \rho^{0.5 \alpha_{rd}\mu_{rd}}}, \label{I11}
\end{align}
\begin{equation}
	I_{12} = \dfrac{H_{1, 0:2, 2:1, 2}^{1, 0:1, 2:1, 1} \left[ \left. \begin{smallmatrix} \left(1-\mu_{rd}:\frac{2}{\alpha_{rd}}, \frac{\alpha_{sr}}{\alpha_{rd}}\right) \\[0.6em] \text{--} \end{smallmatrix} \right\vert  \left. \begin{smallmatrix} (1, 1), & (1, 1) \\[0.6em] (1, 1), & (0, 1)\end{smallmatrix}\right\vert \left. \begin{smallmatrix} (1, 1) &\\[0.6em] (\mu_{sr}, 1),& (0, 1)\end{smallmatrix}\!\right\vert \rho \left( \frac{\Omega_{rd}}{\mu_{rd}^{1/\alpha_{rd}}} \right)^{2}, \frac{\mu_{sr} \Omega_{rd}^{\alpha_{sr}}  a_2^{-0.5\alpha_{sr}}}{\mu_{rd}^{\alpha_{sr}/\alpha_{rd}}\Omega_{sr}^{\alpha_{sr}}} \right]}{\Gamma(\mu_{rd}) \Gamma(\mu_{sr})}. \label{I12}
\end{equation} 
\end{theorem}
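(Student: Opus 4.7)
The plan is to follow the same route that is used for Theorem~1 and reduce the computation to four one-dimensional integrals whose evaluations each match a known Mellin--Barnes form. Setting $Z_1 \triangleq a_2 \lambda_{sr}$ and $Z_2 \triangleq \lambda_{rd}$, so that $Y = \min\{Z_1,Z_2\}$, I would first invoke the order-statistic identity
\[
f_Y(y) \;=\; f_{Z_1}(y)\bigl[1-F_{Z_2}(y)\bigr] + f_{Z_2}(y)\bigl[1-F_{Z_1}(y)\bigr].
\]
Substituting into~\eqref{C_s2_int} and expanding the brackets yields exactly the four-term decomposition $I_9 - I_{10} + I_{11} - I_{12}$, in which $I_9,I_{11}$ are marginal terms of the form $\int_0^{\infty}\ln(1+\rho y)\,f_{Z_i}(y)\,dy$, while $I_{10},I_{12}$ are cross terms of the form $\int_0^{\infty}\ln(1+\rho y)\,f_{Z_i}(y)\,F_{Z_j}(y)\,dy$. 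The density and CDF of $Z_1 = a_2\lambda_{sr}$ follow directly from~\eqref{f_lambda_tau} and~\eqref{F_lambda_tau} by the scaling $\Omega_{sr} \mapsto a_2^{1/2}\Omega_{sr}$, and this rescaling is the source of every power of $a_2$ visible in~\eqref{I9}--\eqref{I12}.

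Second, for $I_9$ and $I_{11}$ I would recycle verbatim the technique that produces $I_1$ and $I_3$ in Theorem~1: write $\ln(1+\rho y)$ as a Meijer-$G$ of type $G^{1,2}_{2,2}$, write the exponential factor in~\eqref{f_lambda_tau} as a Meijer-$G$ of type $G^{1,0}_{0,1}$, convert the resulting $y$-integral into a Mellin convolution, and evaluate it through a standard Meijer-$G$ table entry. The fractional exponent $\alpha_\tau/2$ in the argument of the exponential is absorbed using the Gauss--Legendre multiplication theorem for the Gamma function; this step inflates the single pole of the exponential's Meijer-$G$ into the arithmetic-progression block $\Delta(\alpha_\tau,\cdot)$, and inflates the two parameters of the logarithm's Meijer-$G$ into $\Delta(2,0)$, so that collecting the parameters gives precisely the Meijer-$G$ expressions in~\eqref{I9} and~\eqref{I11}.

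Third, for $I_{10}$ and $I_{12}$ I would additionally represent the lower incomplete Gamma factor in~\eqref{F_lambda_tau} as a Meijer-$G$ of type $G^{1,1}_{1,2}$, so that the integrand becomes a product of three Meijer-$G$ kernels. Inserting each kernel's Mellin--Barnes representation and interchanging the $y$-integral with the two Mellin contours reduces the inner $y$-integration to a single Gamma ratio, leaving a double contour integral that, after matching to~\cite[Eqn.~(2.57)]{HFunctionBook}, is precisely the EGBFHF of index $(1,0{:}2,2{:}1,2)/(1,0{:}1,2{:}1,1)$ with the two-variable arguments displayed in~\eqref{I10} and~\eqref{I12}. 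The main obstacle is the bookkeeping at this last step: two distinct non-linearity parameters $\alpha_{sr}$ and $\alpha_{rd}$ generate two different $\Delta$-sequences in the EGBFHF parameter lists, and one must carefully verify that the resulting double Mellin--Barnes integral meets the absolute-convergence conditions of~\cite[Eqn.~(2.57)]{HFunctionBook} for all admissible $\alpha_{sr},\alpha_{rd},\mu_{sr},\mu_{rd}>0$ and throughout the operative range of $\rho$.
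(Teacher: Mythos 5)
Your proposal follows essentially the same route as the paper's Appendix~B: the identical order-statistic expansion of $f_Y$ into the four integrals $I_9$--$I_{12}$, the reduction of $Z_1 = a_2\lambda_{sr}$ to a rescaled $\alpha$-$\mu$ variate, Meijer-$G$ representations of the logarithm, exponential and incomplete Gamma factors, the multiplication-theorem/Mellin-convolution evaluation for the marginal terms, and the double Mellin--Barnes identification with the EGBFHF of~\cite[Eqn.~(2.57)]{HFunctionBook} (via~\cite[Eqn.~(2.3)]{Mittal}) for the cross terms. This is correct and matches the paper's argument.
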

\begin{proof}
	See Appendix~B.
\end{proof}

Using~\eqref{C_s1_theorem} and~\eqref{C_s2_theorem}, a closed-form expression for the average achievable rate for CRS-NOMA in the presence of $\alpha$-$\mu$ fading is given by
\begin{align}
	\bar C = \bar C_{s_1} + \bar C_{s_2}. \label{C_sum}
\end{align}

In contrast to this, for CRS-OMA, the signals received at the relay and the destination in the first time slot are given by
\begin{align}
	y_{sr, \mathrm{OMA}} = & h_{sr} \sqrt{P}s_1 + n_{sr}, \notag \\
	y_{sd, \mathrm{OMA}} = & h_{sd} \sqrt{P}s_1 + n_{sd}. \notag
\end{align}
The relay forwards its estimate of symbol $s_1$, denoted by $\hat s_1$, in the next time slot. The signal received at the destination in the second time slot is given by 
\begin{align}
	y_{rd, \mathrm{OMA}} = & h_{rd} \sqrt{P}\hat s_1 + n_{rd}. \notag
\end{align}
Therefore, the average achievable rate for the CRS-OMA is given by 
\begin{align}
	\bar C_{\mathrm{OMA}} = 0.5 \mathbb E_Z[\log_2 (1 + \rho Z)], \label{C_OMA}
\end{align}
where $Z \triangleq \min\{\lambda_{sr}, \lambda_{sd} + \lambda_{rd}\}$. Since the focus of this paper is on the NOMA-based system, we do not provide a
closed-form analysis for the average achievable rate of CRS-OMA.

\subsection{Outage probability for CRS-NOMA}
In this section, we derive closed-form expressions for the outage probabilities of $s_1$ and $s_2$. We define $\mathcal O_1$ as the outage event for symbol $s_1$, i.e., the event where either the relay or the destination fails to decode $s_1$ successfully. Therefore, for a given target data rate $R_1$, the outage probability for symbol $s_1$ is given by 
\begin{align}
	\Pr(\mathcal O_1) = & \Pr(C_{s_1} < R_1) = \Pr(X < \Phi) =  F_{\lambda_{sr}}(\Phi_1) + F_{\lambda_{sd}}(\Phi_1) - F_{\lambda_{sr}}(\Phi_1) F_{\lambda_{sd}}(\Phi_1), \notag
\end{align}
where $\eta_1 = 2^{2R_1} - 1$ and $\Phi_1 = \tfrac{\eta_1}{\rho(a_1 - \eta_1 a_2)}$. The system design must ensure that $a_1 > \eta_1 a_2$, otherwise the outage probability for symbol $s_1$ will always be $1$ as noted in~\cite{RelaySelectionDing}. Using~\eqref{F_lambda_tau}, the closed-form expression for the outage probability of $s_1$ in CRS-NOMA is given by
\begin{align}
	\Pr(\mathcal O_1) = \dfrac{\gamma\left( \mu_{sr}, \frac{\mu_{sr}}{\Omega_{sr}^{\alpha_{sr}}}\Phi_1^{0.5 \alpha_{sr}}\right) }{\Gamma(\mu_{sr})}+ \dfrac{\gamma\left( \mu_{sd}, \frac{\mu_{sd}}{\Omega_{sd}^{\alpha_{sd}}}\Phi_1^{0.5 \alpha_{sd}}\right)}{\Gamma(\mu_{sd})} - \dfrac{\gamma\left( \mu_{sr}, \frac{\mu_{sr}}{\Omega_{sr}^{\alpha_{sr}}}\Phi_1^{0.5 \alpha_{sr}}\right) \gamma\left( \mu_{sd}, \frac{\mu_{sd}}{\Omega_{sd}^{\alpha_{sd}}}\Phi_1^{0.5 \alpha_{sd}}\right)}{\Gamma(\mu_{sr}) \Gamma(\mu_{sd})}. \label{Out_s1}
\end{align}
Next we define the outage event for symbol $s_2$, denoted by $\mathcal O_2$, as
\begin{equation}
	\mathcal O_2 = \mathcal E_1 \cup \mathcal E_2 \cup \mathcal E_3, \notag 
\end{equation}
where $\mathcal E_1$ is the event when symbol $s_1$ cannot be successfully decoded at the relay, $\mathcal E_2$ is the event when symbol $s_1$ is successfully decoded at the relay, but symbol $s_2$ cannot be successfully decoded at the relay, and $\mathcal E_3$ is the event when both $s_1$ and $s_2$ are successfully decoded at the relay, but symbol $s_2$ cannot be successfully decoded at the destination. It is important to note that the events $\mathcal E_1, \mathcal E_2$ and $\mathcal E_3$ are disjoint. Therefore, for a given target data rate $R_2$, the outage probability for symbol $s_2$ can be given by
\begin{align}
	\Pr(\mathcal O_2) & \!= \!\!\begin{cases} \Pr(\lambda_{sr} < \Phi_{1}) + \Pr(\lambda_{sr} \geq \Phi_{1}, \lambda_{sr} < \Phi_{2}) + \Pr (\lambda_{sr} > \Phi_{2}, \lambda_{rd} < \eta_{2}/\rho); &\operatorname{if}\,\,\Phi_{1} < \Phi_{2} \\
	\Pr(\lambda_{sr} < \Phi_{1}) + \Pr(\lambda_{sr} > \Phi_{1}, \lambda_{rd} < \eta_{2}/\rho); & \operatorname{otherwise}\end{cases} \notag \\
	& =\! F_{\lambda_{sr}}\! (\Phi_{\max})\!+\! F_{\lambda_{rd}}\!\!\left(\!\!\dfrac{\eta_{2}}{\rho}\!\!\right) \!-\! F_{\lambda_{sr}}\!(\!\Phi_{\max}\!) F_{\lambda_{rd}}\!\left(\!\!\dfrac{\eta_{2}}{\rho}\!\!\right), \notag 
\end{align}
where $\eta_2 = 2^{2R_2} - 1$, $\Phi_2 = \tfrac{\eta_2}{a_2\rho}$ and $\Phi_{\max} = \max\{\Phi_1, \Phi_2\}$. Substituting the expressions for $F_{\lambda_{sr}}(\Phi_{\max})$ and $F_{\lambda_{rd}}(\eta_2/\rho)$ using~\eqref{F_lambda_tau}, the closed-form expression for the outage probability of symbol $s_2$ is given by
\begin{align}
	\Pr(\mathcal O_2) =   \dfrac{ \gamma\left( \mu_{sr}, \frac{\mu_{sr}}{\Omega_{sr}^{\alpha_{sr}}}\Phi_{\max}^{0.5 \alpha_{sr}}\right) }{\Gamma(\mu_{sr})} + \dfrac{\gamma \left(\mu_{rd}, \frac{\mu_{rd}\eta_2^{0.5 \alpha_{rd}}}{\rho^{0.5 \alpha_{rd}} \Omega_{rd}^{\alpha_{rd}}} \right)}{\Gamma(\mu_{rd})} - \dfrac{ \gamma\left( \mu_{sr}, \frac{\mu_{sr}}{\Omega_{sr}^{\alpha_{sr}}}\Phi_{\max}^{0.5 \alpha_{sr}}\right) \gamma \left(\mu_{rd}, \frac{\mu_{rd}\eta_2^{0.5 \alpha_{rd}}}{\rho^{0.5 \alpha_{rd}} \Omega_{rd}^{\alpha_{rd}}} \right)}{\Gamma(\mu_{sr}) \Gamma(\mu_{rd})}.\label{Out_s2}
\end{align}
\subsection{Diversity analysis for CRS-NOMA}
Using~\eqref{F_lambda_tau} and the series expansion of the lower incomplete Gamma function as given in~\cite[Eqn.~(8.11.4),~p.~180]{NIST}, we have 
\begin{align}
	F_{\lambda_{sr}}(\Phi_1) = & \dfrac{\mu_{sr}^{\mu_{sr}} \Phi_1^{0.5 \alpha_{sr} \mu_{sr}}}{\Gamma(\mu_{sr}) \Omega_{sr}^{\alpha_{sr} \mu_{sr}}} \exp \left( \dfrac{-\mu_{sr} \Phi_1^{0.5 \alpha_{sr}}}{\Omega_{sr}^{\alpha_{sr}}}\right) \sum_{k = 0}^{\infty} \dfrac{\mu_{sr}^{k} \Phi_1^{0.5 k\alpha_{sr}}}{\Omega_{sr}^{k\alpha_{sr}} (\mu_{sr})_{k+1}}, \notag
\end{align}
where $(x)_y = \Gamma(x + y)/\Gamma(x)$ is the Pochhammer symbol. Using the series expansion of the exponential function, we have 
\begin{align}
	F_{\lambda_{sr}}(\Phi_1) = \dfrac{\mu_{sr}^{\mu_{sr}}}{\Omega_{sr}^{\alpha_{sr}\mu_{sr}}} \sum_{l = 0}^{\infty} \sum_{k = 0}^{\infty} \dfrac{(-1)^l \Phi_1^{0.5 \alpha_{sr}(\mu_{sr} + l + k)} \mu_{sr}^{l + k}}{l! \, \Omega_{sr}^{\alpha_{sr} (l + k)} \Gamma(\mu_{sr} + k + 1)}. \notag 
\end{align}
Replacing $\Phi_1$ by $\tfrac{\eta_1}{\rho (a_1 - \eta_1 a_2)}$, the asymptotic behavior ($\rho \to \infty$) of $F_{\lambda_{sr}}(\Phi_1)$ is given by 
\begin{align}
	F_{\lambda_{sr}}(\Phi_1) = & \dfrac{\mu_{sr}^{\mu_{sr}} \eta_1^{0.5 \alpha_{sr} \mu_{sr}} \rho^{-0.5 \alpha_{sr}\mu_{sr}}}{\Omega_{sr}^{\alpha_{sr}\mu_{sr}}(a_1 - \eta_1 a_2)^{0.5 \alpha_{sr} \mu_{sr}} \Gamma(\mu_{sr} + 1)}+ \mathscr O \left( \rho^{-(0.5 \alpha_{sr} \mu_{sr} + 1)}\right), \label{F_lambdaSR_asymp}
\end{align}
where $\mathscr O(\cdot)$ is the Landau symbol. It is clear from~\eqref{F_lambdaSR_asymp} that for high transmit SNR $\rho$, $F_{\lambda_{sr}}(\Phi_1)$ decays as $\rho^{-0.5 \alpha_{sr} \mu_{sr}}$. Analogously, it can be shown that at high transmit SNR, $F_{\lambda_{sd}}(\Phi_1)$ decays as $\rho^{-0.5 \alpha_{sd} \mu_{sd}}$ and $F_{\lambda_{sr}}(\Phi_1) F_{\lambda_{sd}}(\Phi_1)$ decays as $\rho^{-0.5 (\alpha_{sr} \mu_{sr} + \alpha_{sd} \mu_{sd})}$. Therefore, using~\eqref{Out_s1}, it is straightforward to conclude that the diversity order of symbol $s_1$ is $0.5 \min\{\alpha_{sr} \mu_{sr}, \alpha_{sd} \mu_{sd}\}$. Using similar arguments, it can be deduced from~\eqref{Out_s2} that the diversity order of symbol $s_2$ is $0.5 \min\{\alpha_{sr} \mu_{sr}, \alpha_{rd} \mu_{rd}\}$.
\section{Results and Discussion}
\begin{table}[t]
\centering
\caption{Optimal value of $a_2$ for different transmit SNR for $\Omega_{sr} = \Omega_{rd} = 10$, $\Omega_{sd} = 1$ and $R_1 = 1$ bps/Hz.}
\begin{tabular}{|c|c|c|c|c|c|c|c|c|}
\hline
\multirow{4}{*}{$(\alpha, \mu)$} & \multicolumn{8}{c|}{Transmit SNR $(\rho)$ in dB}    \\ \cline{2-9} 
                                 & 0   & 5   & 10   & 15   & 20   & 25   & 30   & 35   \\ \cline{2-9} 
                                 & \multicolumn{8}{c|}{Optimal value of $a_2$}                          \\ \hline
(2, 1)                           & 0.24 & 0.24 & 0.24 & 0.24 & 0.17 & 0.1  & 0.06 & 0.04 \\ \hline
(2, 2)                           & 0.24 & 0.24 & 0.24 & 0.24 & 0.21 & 0.14 & 0.09 & 0.06 \\ \hline
(3, 1)                           & 0.24 & 0.24 & 0.24 & 0.24 & 0.21 & 0.14 & 0.09 & 0.06 \\ \hline
(4, 1)                           & 0.24 & 0.24 & 0.24 & 0.24 & 0.24 & 0.17 & 0.12 & 0.08 \\ \hline
\end{tabular}
\label{Opt_table}
\end{table}
In this section, we present analytical and numerical\footnote{We do not realize the actual scenario for numerical computation, but rather generate the random variables and then evaluate~\eqref{C_s1_int},~\eqref{C_s2_int} and~\eqref{C_OMA}.} results for the average achievable rate, outage probability and diversity order of the NOMA-based cooperative relaying system. Table~\ref{Opt_table} shows the optimal value of the power allocation coefficient $a_2$, which maximizes the average achievable rate for CRS-NOMA, for different values of $\alpha_{sr} = \alpha_{sd} = \alpha_{rd} = \alpha$, $\mu_{sr} = \mu_{sd} = \mu_{rd} = \mu$, and $\rho$. The necessary constraint $a_1 > \eta_1 a_2$ (as noted in~Section~III-B) implies a possible range $0 < a_2 < 2^{-2R_1}$. Therefore, the optimization of $a_2$ was performed using an exhaustive search over the $M$-element discrete set $a_2 \in \{ \epsilon, 2 \epsilon, 3 \epsilon, \ldots, M \epsilon \}$, where $M$ is a positive integer and $\epsilon = 2^{-2R_1}/ (M+1)$. In Table I, we consider $M = 24 \ (\epsilon = 0.01)$, $R_1 = R_2 =1$ bps/Hz, $\Omega_{sr} = \Omega_{rd} = 10$ and $\Omega_{sd} = 1$.
Note that the average achievable rate for CRS-NOMA depends on the target data rate for symbol $s_1$. The expressions derived in this paper are valid also for the case when $\alpha_{sr} \neq \alpha_{sd} \neq \alpha_{rd}$ and $\mu_{sr} \neq \mu_{sd} \neq \mu_{rd}$.
 \begin{figure}[t]
\centering
\includegraphics[width=0.7\linewidth]{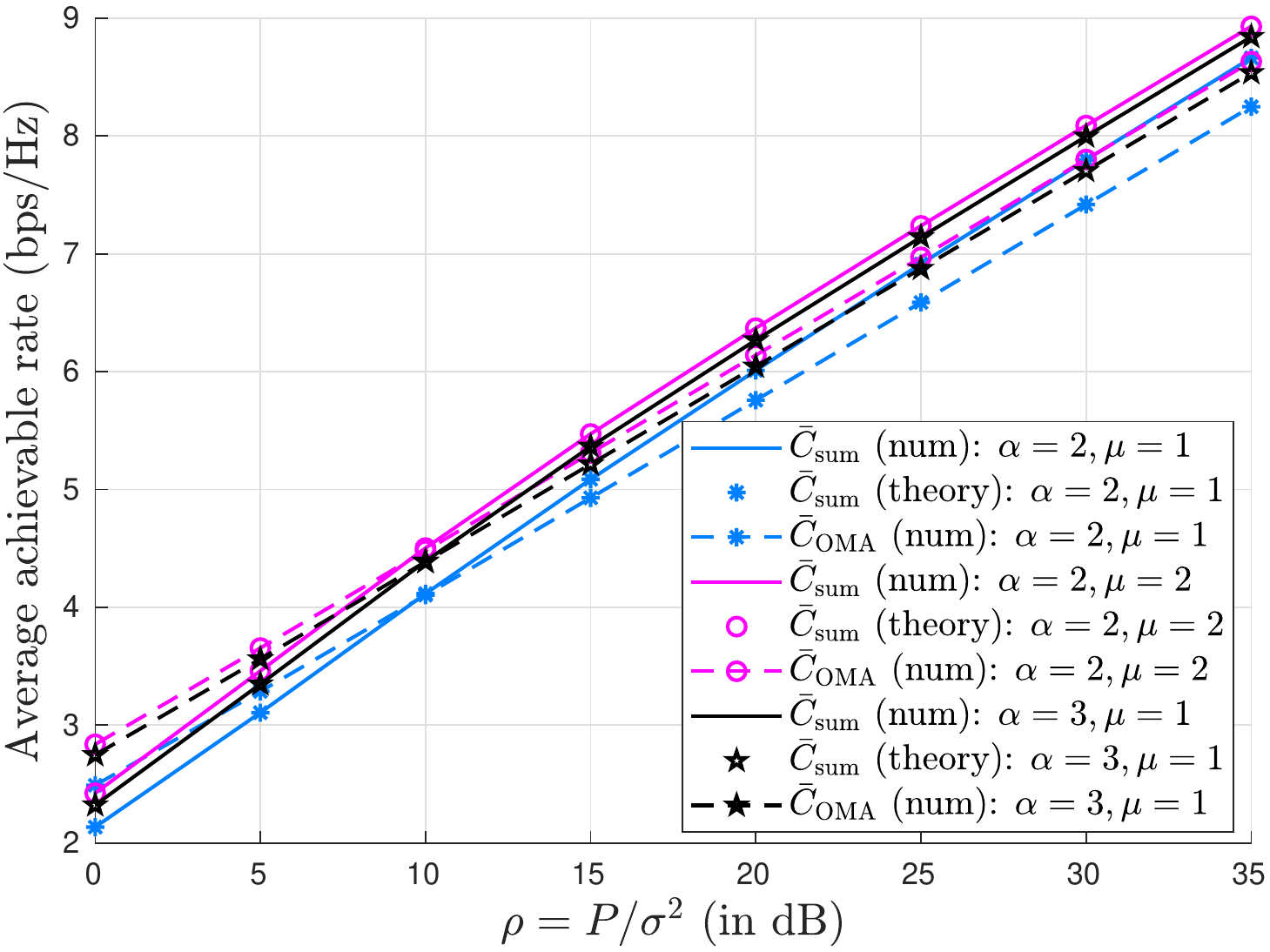}
\caption{Average achievable rate (with optimal power allocation in the NOMA case) for CRS with $\Omega_{sr}~=~\Omega_{rd}~=~10$, $\Omega_{sd} = 1$ and $R_1 = 1$ bps/Hz.}
\label{Cap_CRS}
\end{figure}
\begin{figure}[t]
\centering
\includegraphics[width=0.7\linewidth]{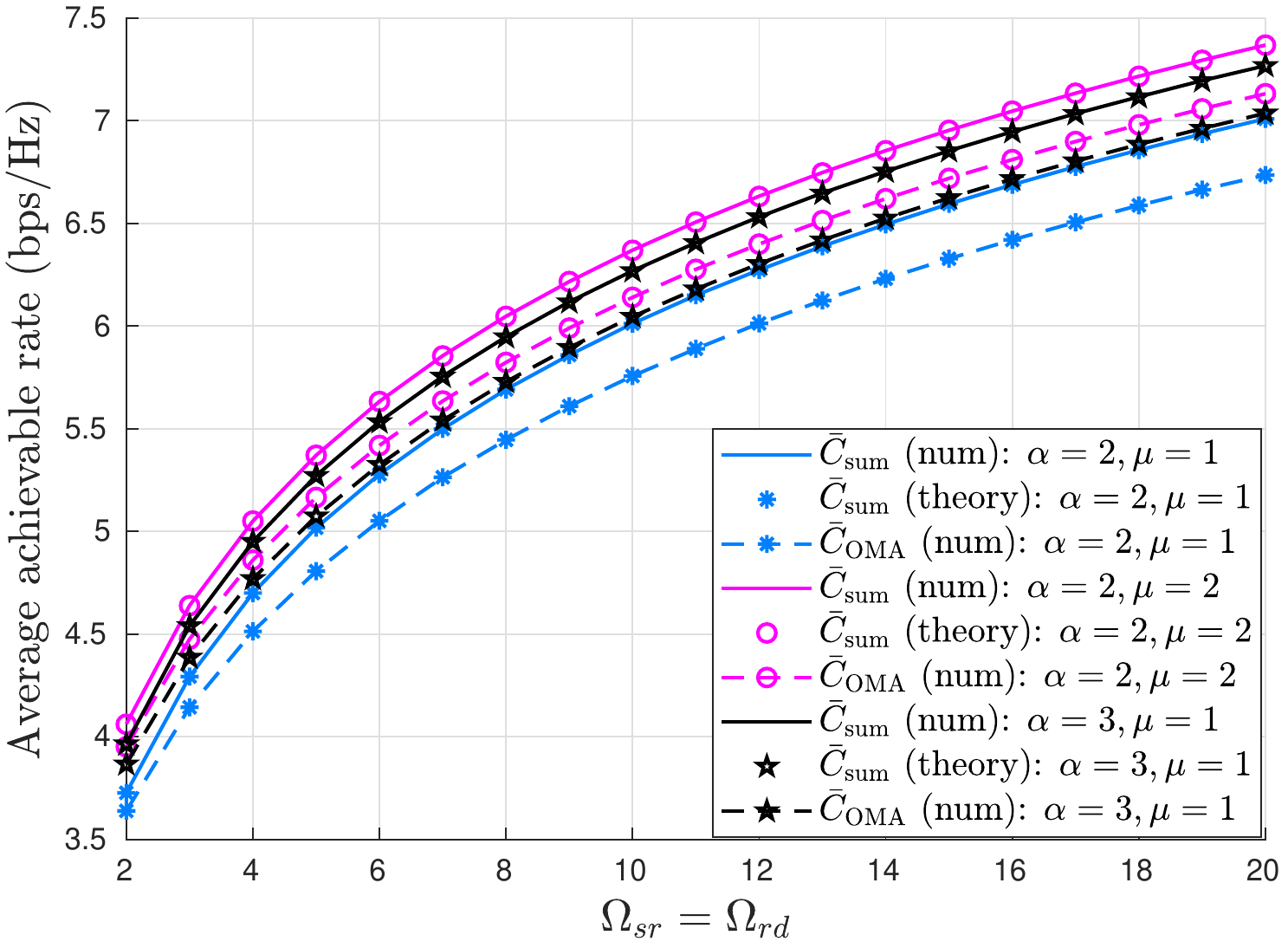}
\caption{Average achievable rate (with optimal power allocation in  the NOMA case) for CRS with transmit SNR $\rho = 20$ dB, $\Omega_{sd} = 1$ and $R_1 = 1$ bps/Hz.}
\label{Cap_Omega}
\end{figure}

Fig.~\ref{Cap_CRS} shows the average achievable rate for the CRS-NOMA (with optimal $a_2$ for each $\rho$) and CRS-OMA for $\alpha = 2, \mu = 1$ (i.e., Rayleigh fading); $\alpha = 2, \mu = 2$ (i.e., Nakagami-$m$ fading with $m = 2$) and $\alpha = 3, \mu = 1$ (i.e., Weibull fading with shape parameter $\alpha = 3$). The figure shows a perfect agreement between the numerical and analytical plots for the NOMA system. It is also clear from the figure that for high transmit SNR $\rho$, the NOMA-based CRS outperforms its OMA-based counterpart in terms of spectral efficiency. 

For the same set of channel parameters, Fig.~\ref{Cap_Omega} shows the average achievable rate for CRS-NOMA (with optimal $a_2$ for each $\Omega_{sr} = \Omega_{rd}$) and CRS-OMA at a fixed transmit SNR $\rho = 20$ dB and $\Omega_{sd} = 1$ against $\Omega_{sr} = \Omega_{rd}$. It is clear from the figure that the CRS-NOMA outperforms CRS-OMA by achieving significant gain in spectral efficiency.

Fig.~\ref{Cap_alpha} shows the average achievable rate of CRS-NOMA (with optimal $a_2$ for each $\rho$) and CRS-OMA for a fixed value of the clustering parameter $\mu = 1$ and different values of the nonlinearity parameter $\alpha$. It is clear from the figure that as $\alpha$ increases, the spectral efficiency of the CRS also increases. Fig.~\ref{Cap_mu} shows the average achievable rate of CRS-NOMA (with optimal $a_2$ for each $\rho$) and CRS-OMA for a fixed value of the nonlinearity parameter $\alpha = 2$ and different values of the clustering parameter $\mu$. It is evident from the figure that with an increase in the value of the clustering parameter, the spectral efficiency of the CRS increases. It can also be noted from Figs.~\ref{Cap_alpha} and~\ref{Cap_mu} that with increase in the value of nonlinearity/clustering parameter, the SNR at which the CRS-NOMA outperforms its OMA counterpart becomes higher.

Figs.~\ref{Out_s1_fig} and~\ref{Out_s2_fig} show the outage probabilities of symbols $s_1$ and $s_2$, respectively, in CRS-NOMA, evaluated using the methodology of~ Section~III-B for $a_2 = 0.1$, $\Omega_{sr} = \Omega_{sd} = 10$ and $\Omega_{sd} = 1$. It is evident that both outage probabilities depend on the fading parameters $\alpha$ and $\mu$, and there is a consistent agreement between the observed diversity order for each symbol and the analytical diversity order results derived in~Section~III-C.
\section{Conclusion}
In this work, an exact closed-form expression for the average achievable rate of a CRS-NOMA in generalized $\alpha$-$\mu$ fading was derived and validated through numerical experiments. The derived expressions can be used to analyze the performance of the CRS-NOMA over diverse small-scale fading channels including Rayleigh, Nakagami-$m$, exponential, Gamma and Weibull, as well as some large-scale fading channels. Our results show that the optimal power allocation for maximizing the achievable rate of CRS-NOMA depends on the target data rate requirement. Our results also demonstrate the effect of the nonlinearity and the clustering parameters on the average achievable rate of the CRS. Outage probability and diversity order were also analyzed, and it was shown that a simple relationship exists between the diversity order for each symbol, and the nonlinearity and clustering parameters of the fading links.
\section*{Acknowledgment}
This publication has emanated from research conducted with the financial support of Science Foundation Ireland (SFI) and is co-funded under the European Regional Development Fund under Grant Number 13/RC/2077.
\begin{figure}[t]
\centering
\includegraphics[width=0.7\linewidth]{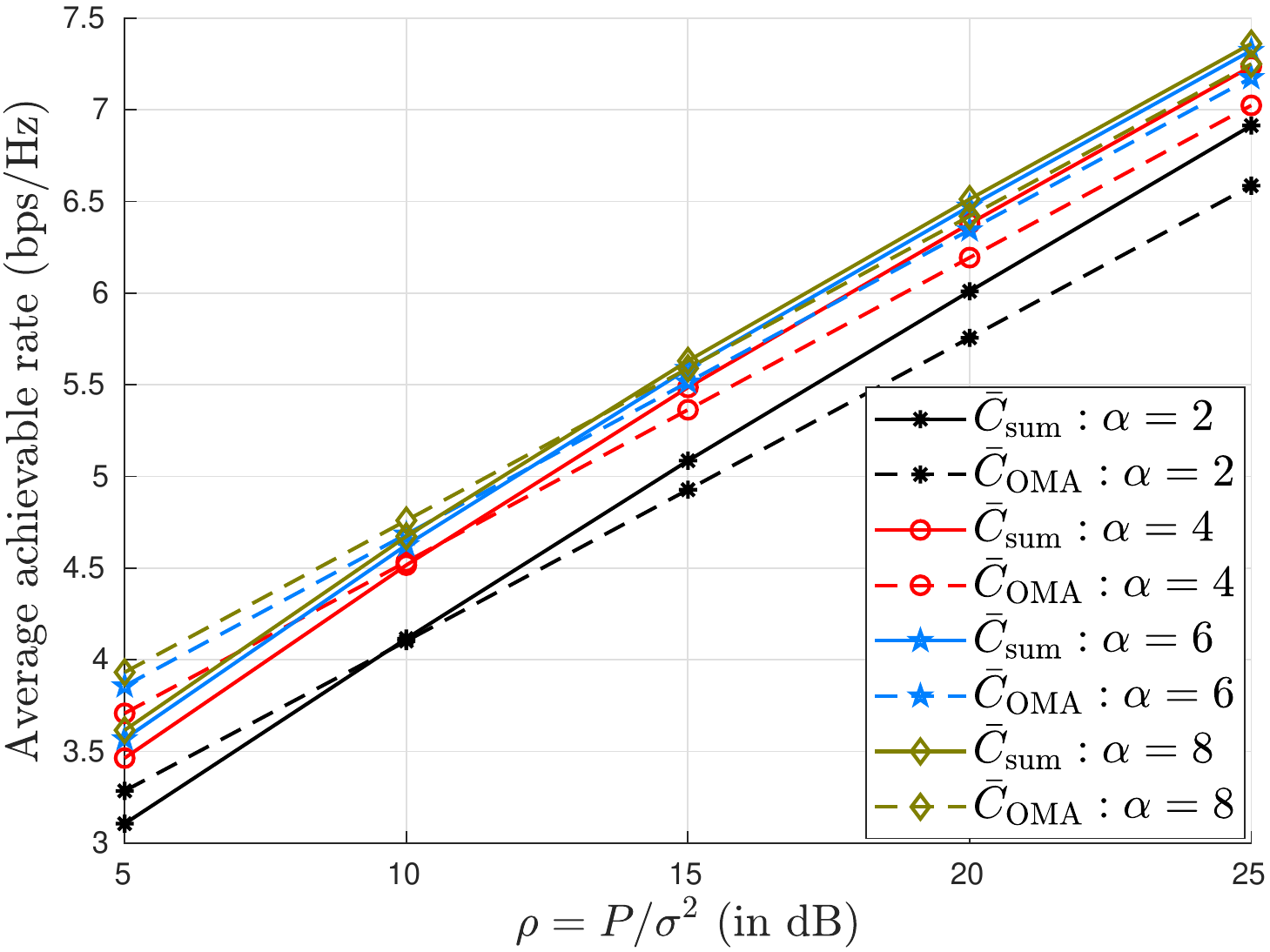}
\caption{Average achievable rate (with optimal power allocation in the NOMA case) for CRS with $\Omega_{sr} = \Omega_{rd} = 10$, $\Omega_{sd} = 1$, $R_1 = 1$ bps/Hz and $\mu = 1$.}
\label{Cap_alpha}
\end{figure}
\begin{figure}[t]
\centering
\includegraphics[width=0.7\linewidth]{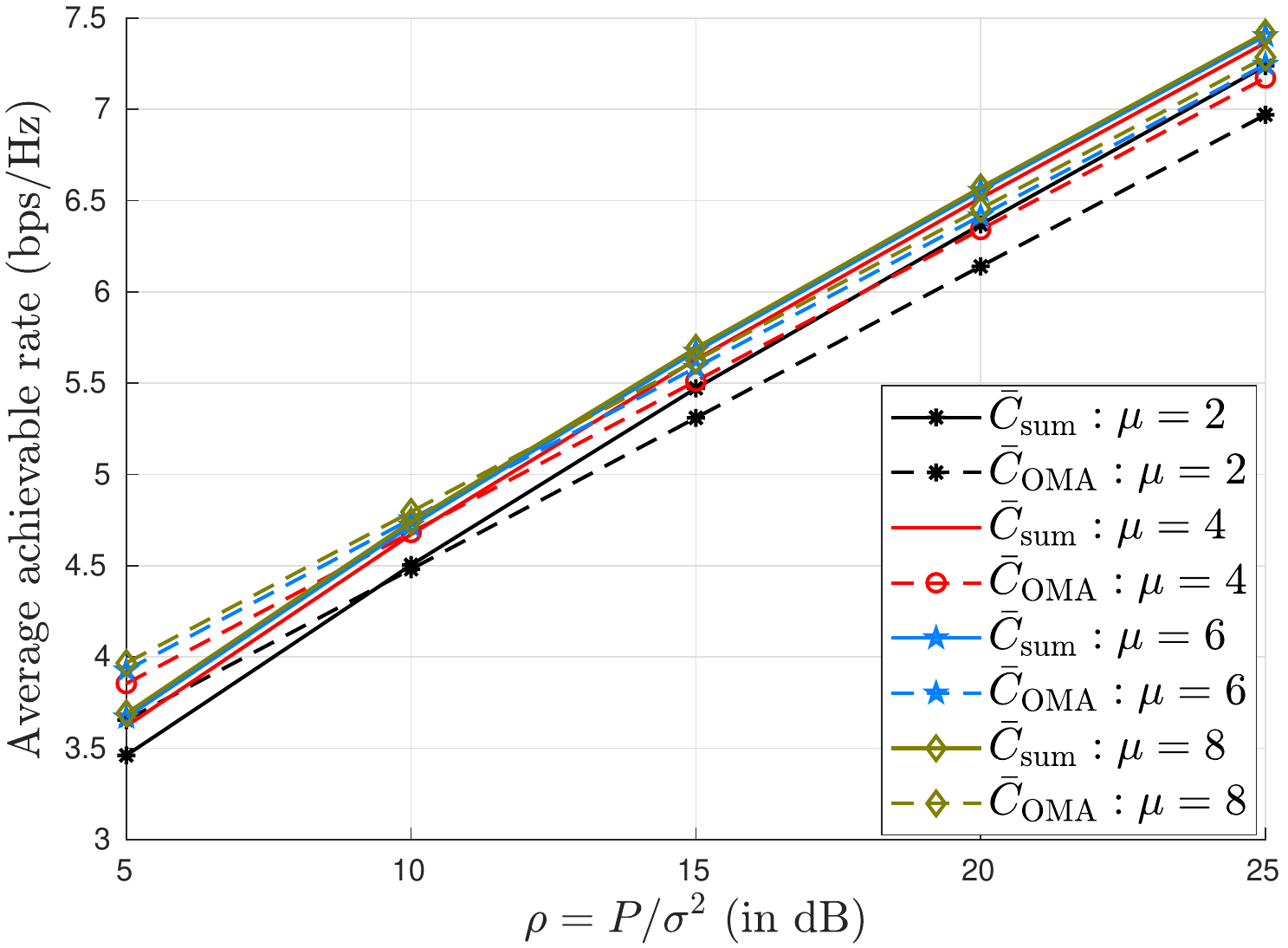}
\caption{Average achievable rate (with optimal power allocation in the NOMA case) for CRS with $\Omega_{sr} = \Omega_{rd} = 10$, $\Omega_{sd}\! = \!1$, $R_1 = 1$ bps/Hz  and $\alpha = 2$.}
\label{Cap_mu}
\end{figure}
\appendices
\section{Proof of Theorem~1}
The PDF of $X = \min\{\lambda_{sr}, \lambda_{sd}\}$ can be given by
\begin{align}
	\!f_X(x) \!=\! f_{\lambda_{sr}}(x)\!-\! f_{\lambda_{sr}}(x) F_{\lambda_{sd}}(x)\! +\! f_{\lambda_{sd}}(x)\! -\! f_{\lambda_{sd}}(x) F_{\lambda_{sr}}\!(x). \label{fX}
\end{align}
The expressions for $f_{\lambda_{sr}}(x)$, $f_{\lambda_{sd}}(x)$,$F_{\lambda_{sr}}(x)$ and $F_{\lambda_{sd}}(x)$ can be obtained using~\eqref{F_lambda_tau} and~\eqref{f_lambda_tau}. Substituting the expression of $f_X(x)$ from~\eqref{fX} into~\eqref{C_s1_int}, we have 
\begin{align}
	& \mathbb I_1 \!\!=\!\!\! \underbrace{\int_{0}^{\infty}  \!\!\!\!\!\! \ln(1 + \rho x) f_{\lambda_{sr}}(x) dx}_{I_1} \! -\!\!\! \underbrace{\int_{0}^{\infty}  \!\!\!\!\!\!\ln(1 + \rho x) f_{\lambda_{sr}}(x) F_{\lambda_{sd}}(x) dx}_{I_2}
	+  \underbrace{\int_{0}^{\infty} \!\!\!\! \ln(1 + \rho x) f_{\lambda_{sd}}(x) dx}_{I_3} \!-\! \underbrace{\int_{0}^{\infty}\!\!\!\! \ln(1 + \rho x) f_{\lambda_{sd}}(x) F_{\lambda_{sr}}(x) dx}_{I_4} \notag
\end{align}
Solving for $I_1$, we have 
\begin{align}
	I_1  =  &  \int_{0}^{\infty} \ln(1 + \rho x)f_{\lambda_{sr}}(x)dx 
	=  \dfrac{0.5\alpha_{sr}\mu_{sr}^{\mu_{sr}}}{\Omega_{sr}^{\alpha_{sr}\mu_{sr}} \Gamma(\mu_{sr})} \!\!\int_0^{\infty}\!\!\!\!\!\!\!\ln(\!1 \!+\! \rho x\!) \exp \!\!\left(\! \! \dfrac{-\mu_{sr} x^{\tfrac{\alpha_{sr}}{2}}}{\Omega_{sr}^{\alpha_{sr}}}\!\!\right)\!x^{\tfrac{\alpha_{sr} \mu_{sr}}{2} - 1}dx. \notag
\end{align}
Representing the logarithmic and the exponential functions in terms of Meijer's G-function using~\cite[Eqn.~(11)]{Reduce} yields
\begin{align}
	\!\!\!\!I_1 \! = & \dfrac{0.5\alpha_{sr}\mu_{sr}^{\mu_{sr}}}{\Omega_{sr}^{\alpha_{sr}\mu_{sr}} \Gamma(\mu_{sr})} \!\!\int_0^{\infty} \!\!\!\!x^{0.5\alpha_{sr} \mu_{sr} - 1} G_{2, 2}^{1, 2}\left( \rho x\left\vert \begin{smallmatrix} 1, & 1 \\[0.6em] 1, & 0 \end{smallmatrix}\right.\right)  G_{0, 1}^{1, 0} \left( \dfrac{\mu_{sr} x^{0.5\alpha_{sr}}}{\Omega_{sr}^{\alpha_{sr}}} \left\vert \begin{smallmatrix}\text{--} \\[0.6em] 0 \end{smallmatrix} \right.\right) dx. \notag
\end{align}
\begin{figure}[t]
\centering
\includegraphics[width=0.7\linewidth]{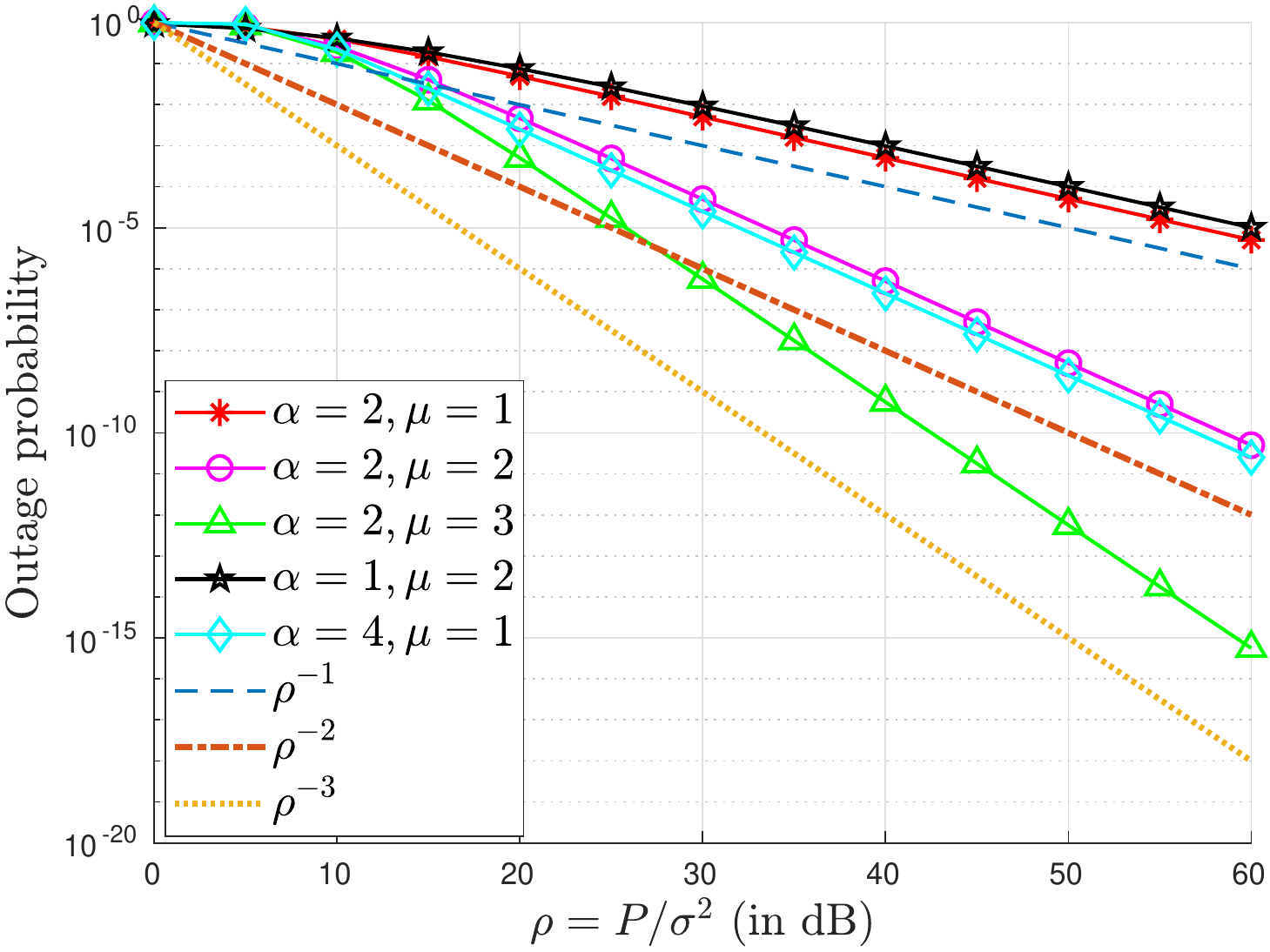}
\caption{Outage probability for CRS-NOMA for symbol $s_1$ for $a_2 = 0.1$, $\Omega_{sr} = \Omega_{sd} = 10$, $\Omega_{rd} = 1$ and $R_1 = 1$ bps/Hz.}
\label{Out_s1_fig}
\end{figure}
\begin{figure}[t]
\centering
\includegraphics[width=0.7\linewidth]{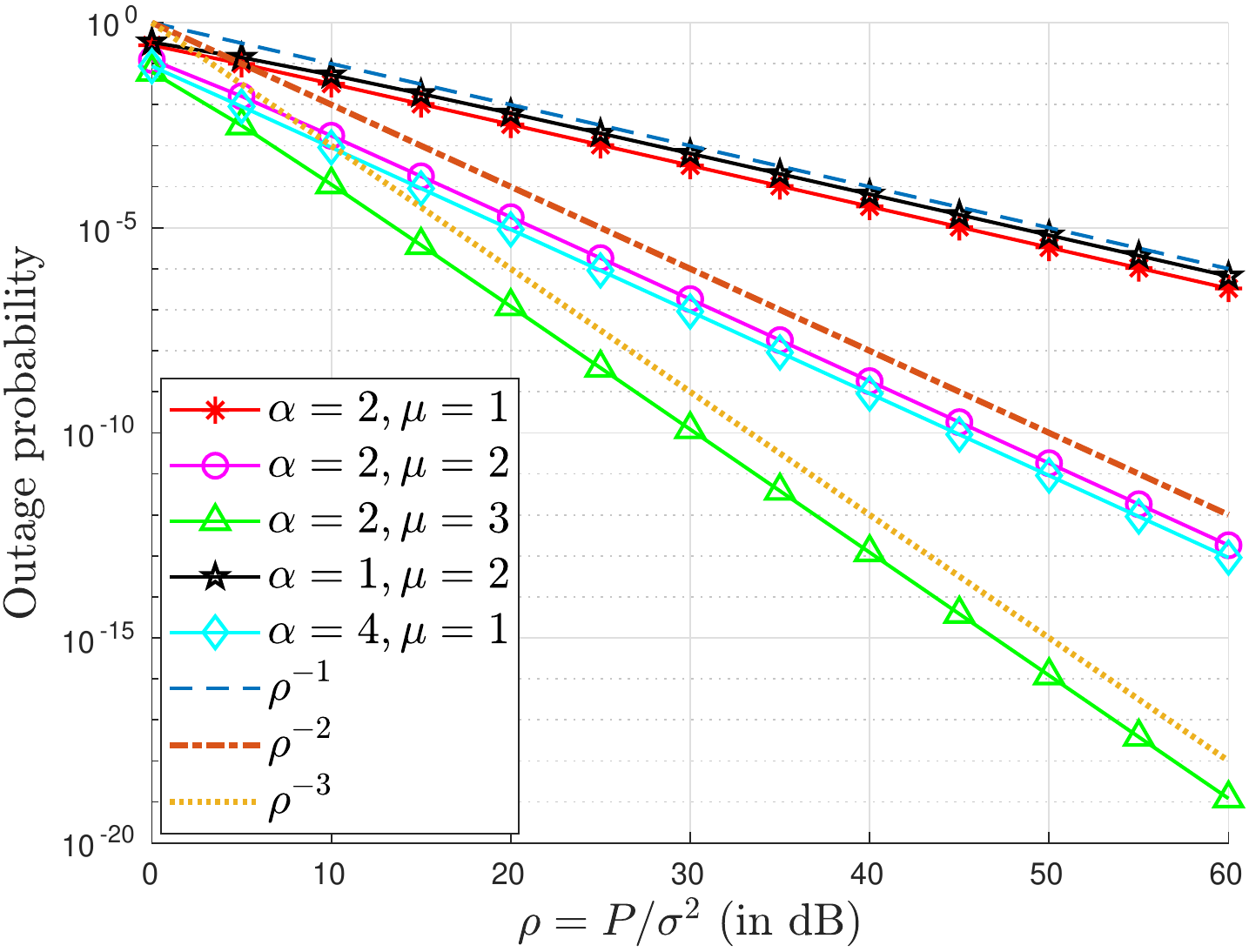}
\caption{Outage probability for CRS-NOMA for symbol $s_2$ for $a_2 = 0.1$, $\Omega_{sr} = \Omega_{sd} = 10$, $\Omega_{rd} = 1$ and $R_1 = R_2 = 1$ bps/Hz..}
\label{Out_s2_fig}
\end{figure}\\
Solving the integration above using~\cite[Eqns.~(7),~(21)]{Reduce}, $I_1$ reduces to~\eqref{I1}. Analogously, the closed-form expression for $I_3$ is given by~\eqref{I3}. Similarly, 
\begin{align}
	I_2 \!= & \int_{0}^{\infty}\ln(1 + \rho x)f_{\lambda_{sr}}(x) F_{\lambda_{sd}}(x)dx \notag \\
	= & \dfrac{0.5\alpha_{sr}\mu_{sr}^{\mu_{sr}}}{\Omega_{sr}^{\alpha_{sr}\mu_{sr}} \Gamma(\mu_{sr}) \Gamma(\mu_{sd})} \!\!\int_0^{\infty}\!\!\!\!\ln(\!1 \!+\! \rho x\!) \exp \left(\! \! \dfrac{-\mu_{sr} x^{0.5\alpha_{sr}}}{\Omega_{sr}^{\alpha_{sr}}}\!\!\right)  \gamma \left( \mu_{sd}, \dfrac{\mu_{sd} x^{0.5\alpha_{sd}}}{\Omega_{sd}^{\alpha_{sd}}}\right) x^{0.5\alpha_{sr} \mu_{sr} - 1} dx. \notag
\end{align}
Representing the logarithmic and the exponential function in terms of Meijer's G-function using~\cite[Eqn.~(11)]{Reduce} and the lower-incomplete Gamma function in terms of Meijer's G-function using~\cite[Eqn.~(2)]{Secrecy} yields
\begin{align}
	I_2 = &  \dfrac{0.5\alpha_{sr}\mu_{sr}^{\mu_{sr}}}{\Omega_{sr}^{\alpha_{sr}\mu_{sr}} \Gamma(\mu_{sr}) \Gamma(\mu_{sd})} \!\!\int_0^{\infty} \!\!\!\!\!x^{\tfrac{\alpha_{sr} \mu_{sr}}{2} - 1} G_{2, 2}^{1, 2}\left(\! \rho x \left\vert \begin{smallmatrix}1, & 1 \\[0.6em]1, & 0 \end{smallmatrix}\right.\!\!\right)  G_{0, 1}^{1, 0}\!\left( \dfrac{\mu_{sr}x^{\tfrac{\alpha_{sr}}{2}}}{\Omega_{sr}^{\alpha_{sr}}} \left\vert \begin{smallmatrix} \text{--} \\[0.6em] 0\end{smallmatrix}\right.\right) \!G_{1, 2}^{1, 1}\!\left(\!\! \dfrac{\mu_{sd}x^{\tfrac{\alpha_{sd}}{2}}}{\Omega_{sd}^{\alpha_{sd}}} \left\vert \begin{smallmatrix} 1 \\[0.6em] \mu_{sd}, \, 0\end{smallmatrix}\right.\!\right) dx. \notag
\end{align}
Using the relation between Meijer's G-function and Fox H-function~(c.f.~\cite[Eqn.~(6.2.8)]{Springer})
\begin{align}
	G_{p, q}^{m, n}\left[ x \left\vert \begin{smallmatrix} \upsilon_1, \ldots, \upsilon_p \\[0.6em] \nu_1, \ldots, \nu_q\end{smallmatrix}\right.\right] = H_{p, q}^{m, n}\left[ x \left\vert \begin{smallmatrix} (\upsilon_1,1), \ldots, (\upsilon_p,1) \\[0.6em] (\nu_1,1),  \ldots, (\nu_q, 1)\end{smallmatrix}\right.\right], \notag
\end{align}
and substituting $\tilde x = x^{0.5 \alpha_{sr}}$ yields
\begin{align}
	I_2 =  & \dfrac{\mu_{sr}^{\mu_{sr}} \Omega_{sr}^{-\alpha_{sr}\mu_{sr}}}{ \Gamma(\mu_{sr}) \Gamma(\mu_{sd})} \!\!\int_0^{\infty} \!\!\!\!\!\tilde x^{\mu_{sr} - 1} H_{2, 2}^{1, 2}\left(\!\! \rho \tilde x^{2/\alpha_{sr}} \left\vert \begin{smallmatrix}(1, 1), & (1, 1) \\[0.6em](1, 1), & (0, 1) \end{smallmatrix}\right.\!\!\right)  \!H_{0, 1}^{1, 0}\!\left(\!\! \dfrac{\mu_{sr}\tilde x}{\Omega_{sr}^{\alpha_{sr}}} \left\vert \begin{smallmatrix} \text{--} \\[0.6em] (0, 1)\end{smallmatrix}\right.\!\!\!\right) \!H_{1, 2}^{1, 1}\!\left(\!\! \dfrac{\mu_{sd}\tilde x^{\tfrac{\alpha_{sd}}{\alpha_{sr}}}}{\Omega_{sd}^{\alpha_{sd}}} \left\vert \begin{smallmatrix} (1, 1) \\[0.6em] (\mu_{sd}, 1), (0, 1)\end{smallmatrix}\right.\!\!\!\right) d\tilde x, \notag
\end{align}
Solving the integration above using~\cite[Eqn.~(2.3)]{Mittal}, the closed-form expression for $I_2$ reduces to~\eqref{I2}. Solving in a similar fashion, the closed-form expression for $I_4$ can be given by~\eqref{I4}.

Also, using~\eqref{fX} and~\eqref{C_s1_int}, we have 
\begin{align}
	& \!\!\!\mathbb I_2 \!\!= \!\!\underbrace{\int_{0}^{\infty} \!\!\!\!\!\!\! \ln(1\! + \!a_2\rho x) f_{\lambda_{sr}}\!(x) dx}_{I_5} \!-\!\!\!\underbrace{\int_{0}^{\infty} \!\!\!\!\!\!\!\ln(1 \!+\! a_2\rho x) f_{\lambda_{sr}}\!(x) F_{\lambda_{sd}}\!(x) dx}_{I_6} +  \!\underbrace{\int_{0}^{\infty} \!\!\!\!\! \ln(1 \!+\! a_2\rho x) f_{\lambda_{sd}}\!(x) dx}_{I_7} \!-\!\! \underbrace{\int_{0}^{\infty}\!\!\!\!\! \ln(1 \!+\! a_2 \rho x) f_{\lambda_{sd}}\!(x) F_{\lambda_{sr}}\!(x) dx}_{I_8} \notag
\end{align}
The closed-form expressions for $I_5, I_6, I_7$ and $I_8$ can be obtained by replacing $\rho$ by $a_2 \rho$ in~\eqref{I1}-\eqref{I4}, respectively. Substituting the expressions for $I_1, I_2, \ldots, I_8$ into~\eqref{C_s1_theorem} yields the closed-form expression for the average achievable rate for symbol $s_1$ of CRS-NOMA in $\alpha$-$\mu$ fading.
\section{Proof of Theorem~2}
Using~\eqref{F_lambda_tau} and a transformation of random variables, the CDF of $a_2 \lambda_{sr}$ can be expressed as 
\begin{align}
	F_{a_2 \lambda_{sr}}(y) = F_{\lambda_{sr}}\left( \dfrac{y}{a_2}\right) =  \dfrac{1}{\Gamma(\mu_{sr})}\gamma \left( \mu_{sr}, \dfrac{\mu_{sr}y^{0.5 \alpha_{sr}}}{a_2^{0.5 \alpha_{sr}} \Omega_{sr}^{\alpha_{sr}}}\right). \notag
\end{align}\balance
Similarly, using~\eqref{f_lambda_tau} and a transformation of random variables, the PDF of $a_2 \lambda_{sr}$ can be expressed as
\begin{align}
	f_{a_2 \lambda_{sr}}(y) = & \dfrac{1}{a_2} f_{\lambda_{sr}} \left(\dfrac{y}{a_2} \right) =   \dfrac{\alpha_{sr} \mu_{sr}^{\mu_{sr}} y^{0.5\alpha_{sr} \mu_{sr} - 1}}{2a_2^{0.5\alpha_{sr} \mu_{sr}} \Omega_{sr}^{\alpha_{sr} \mu_{sr}} \Gamma(\mu_{sr})} \exp \left( \dfrac{-\mu_{sr} y^{0.5\alpha_{sr}}}{a_2^{0.5\alpha_{sr}}\Omega_{sr}^{\alpha_{sr}}}\right).\notag
\end{align}
Therefore, the PDF of $Y$ can be given by
\begin{align}
	f_Y(x) = & f_{a_2 \lambda_{sr}}(y) - f_{a_2 \lambda_{sr}}(y)F_{\lambda_{rd}}(y) + f_{\lambda_{rd}}(y) - f_{\lambda_{rd}}(y)F_{a_2 \lambda_{sr}}(y). \notag 
\end{align}
Substituting the expression of $f_Y(y)$ shown above into~\eqref{C_s2_int}, we have 
\begin{equation}
		I_9 \triangleq \int_{0}^{\infty}\ln(1 + \rho y)f_{a_2 \lambda_{sr}}(y)dy, \quad I_{10} \triangleq \int_{0}^{\infty}\ln(1 + \rho y)f_{a_2 \lambda_{sr}}(y) F_{\lambda_{rd}}(y)dy, \notag
\end{equation}
\begin{equation}
	I_{11} \triangleq \int_{0}^{\infty}\ln(1 + \rho y)f_{\lambda_{rd}}(y)dy, \quad
	I_{12} \triangleq \int_{0}^{\infty}\ln(1 + \rho y)f_{a_2 \lambda_{rd}}(y) F_{a_2\lambda_{sr}}(y)dy. \notag
\end{equation}		
Solving the integrals shown above using the method in Appendix~A, closed-form expressions for $I_9, I_{10}, I_{11}$ and $I_{12}$ can be shown to be given by~\eqref{I9}-\eqref{I12}. 
\bibliographystyle{IEEEtran}
\bibliography{ICC}

\end{document}